\newtheorem{theorem}{Theorem}
\newtheorem{definition}{Definition}
\title{On the equivalence checking problem for deterministic top-down tree automata}
\author{
  Deng Zhibo\\
  Shenzhen MSU-BIT University\\
  Shenzhen, China\\
  \texttt{dengzb@smbu.edu.cn} \\
   \And
  Vladimir A. Zakharov\\
  Shenzhen MSU-BIT University\\
  Shenzhen, China\\
  \texttt{zakh@cs.msu.ru} \\
}
\begin{document}
\maketitle

\begin{abstract}
We present an efficient algorithm for checking language equivalence of states in top-down deterministic finite tree automata (DFTAs). Unlike string automata, tree automata operate over hierarchical structures, posing unique challenges for algorithmic analysis. Our approach reduces the equivalence checking problem to that of checking the solvability of a system of language-theoretic equations, which specify the behavior of a DFTA. By constructing such a system of equations and systematically manipulating with it through substitution and conflict detection rules, we develop a decision procedure that determines whether two states accept the same tree language. We formally prove the correctness and termination of the algorithm and establish its worst-case time complexity as $O(n^2)$ under the RAM (Random Access Machine) model of computation augmented with pointers.
\end{abstract}

\keywords{Tree automaton \and Tree language \and Equivalence checking \and Language equation}

\section{Introduction}

Finite tree automata (FTAs) provide a powerful formalism for processing and analyzing tree-structured data~\cite{tata, tatg}. Unlike finite-state automata (FSAs) that operate on strings, FTAs extend automata theory to hierarchical structures, making them particularly useful in various domains, for example, XML processing~\cite{XML} and program verification~\cite{verif1, verif2}. The ability of FTAs to recognize and manipulate structured data makes them essential tools in both theoretical computer science and practical applications.

FTAs can be classified into two categories based on how they process input trees: top-down and bottom-up tree automata. In a top-down FTA, computation starts at the root of the tree and moves toward the leaves, whereas in a bottom-up FTA, computation proceeds from the leaves to the root. Despite this operational difference, both top-down and bottom-up FTAs are known to have the same expressive power, as they recognize the same class of tree languages~\cite{tata}. However, from an algorithmic point of view they do not have the same behaviour, so the decision problems of them (e.g. determinization, minimization, and equivalence checking) require different approaches. 

The closure of regular tree languages under complementation and intersection easily ensures the decidability of the equivalence problem. As for its complexity, in~\cite{deciding_eq} Helmut Seidl proved that in the general case it is DEXPTIME-complete and PSPACE-complete when restricted to automata accepting finite languages. However, in the same paper~\cite{deciding_eq} it was shown that for any fixed constant $n$, the equivalence of $n$-ambiguous finite tree automata can be decided in polynomial time. Moreover, for weighted tree automata (WTAs), Manssour et al. (2024) proposed holonomic tree automata method~\cite{wta}, which generalize WTAs to computing generating functions for tree series and linking it to differential algebra, providing new tools for automata equivalence and formal power series generation. Yet another way of checking the equivalence of two automata is to compare their minimized versions~\cite{min, mintop}, but this method is obviously far from the best. It should be noted also that most research has focused on bottom-up tree automata, while the algorithmic toolkit for top-down models is still not well developed.

Our work focuses on the equivalence checking problem for top-down deterministic finite tree automata (DFTA), which consists of deciding  whether two states of a given automaton recognize the same tree language. This decision problem has significant theoretical and practical implications. While equivalence checking is well understood for classical finite-state automata on strings~\cite{fas_eq}, its extension to tree automata introduces additional complexity due to the hierarchical nature of tree structures. The computational challenge is posed by the need to compare tree languages defined by recursive state transitions, and hence the need for efficient algorithms that can handle large state spaces and complex transition rules.

The key idea of the equivalence checking algorithm proposed in our work is inspired by some algebraic techniques developed in previous work on FSA models and their extensions~\cite{zakh1, zakh2}. In these papers an effective framework for equivalence checking of automata was introduced; it was shown that it applicable to finite transducers, two-tape automata, biautomata, and push-down automata as well. One of the obvious advantages of this approach is its uniformity: for every automata model it reduces the equivalence checking problem to the solvability of a system of language equations and then applies a variable elimination technique to solve these equations.

We found that DFTAs exhibit precisely the features that enable a straightforward application of this algebraic technique. Firstly, for every state $s$ of a DFTA recursive functional equations of the form $F_q(f(x_1,\dots,x_k))=f(F_{q_1}(x_1),\dots,F_{q_k}(x_k))$ (basic equations) specify all terms accepted by the automaton at its state. Next, the requirement of the equivalence of two states $q_1$ and $q_2$ can be specified by the additional functional equation of the form $F_{q_1}(x)=F_{q_2}(x)$. Thus, the decidability of the equivalence checking problem is reduced to the checking of solvability of the system of functional equations constructed so far. Finally and the most important, the solvability checking of such a system can be achieved by the variable elimination technique. The key point in the application of variable elimination is the capability to transform any equation of the form $f(F_{q'_1}(x_1),\dots,F_{q'_k}(x_k))=f(F_{q''_1}(x_1),\dots,F_{q''_k}(x_k))$ to a series of more simple reduced equations of the form $F_{q'_i}(x)=F_{q''_i}(x)$ and replace all other occurrences of $F_{q'_i}$ with $F_{q''_i}$. As soon as all the reduced functional variables are eliminated from the basic equations, the solvability of the original system is established. If an equation arises that does not allow reduction, this indicates the unsolvability of the original system. As it can be seen from these principles of the variable elimination technique, the number of new equations to be added to the original system does not exceed the number of states of DFTA. If every reduced variable is processed in polynomial time then this yields an equivalence checking algorithm of polynomial time complexity.   

In this paper, we present an efficient algorithm that can check DFTA equivalence in $O(n^2)$ time. Our method exploits the structural properties of DFTAs, in particular their determinism and compositional semantics, and it operates instead of the DFTA transition system with a system of algebraic equations on the terms. By systematically constructing and simplifying these equations, we get a decision procedure that efficiently verifies whether two DFTA states accept the same set of terms. The results obtained in this work contribute to the development of practical and theoretically sound methods for analyzing tree-based computational models.

\section{Preliminaries}

Let $\mathcal{F}$ be a finite set of functional symbols. A ranked alphabet is a couple $(\mathcal{F}, \emph{Arity})$, where \emph{Arity} is a mapping from $\mathcal{F}$ into $N$. The set of symbols of arity $k$ is denoted by $\mathcal{F}^k$, where each element $f \in \mathcal{F}^k$ is referred to as a $k$-ary function symbol. Symbols of arity $0, 1, \dots p$ are respectively called $\textit{constants, unary}, \dots, p\textit{-ary~symbols}$. Let $\mathcal{X}$ be a set of variables, which is disjoint from the function symbols in $\mathcal{F}$. Unlike constants, variables serve as placeholders that can be substituted with terms. The set of \emph{terms} $T(\mathcal{F},\mathcal{X})$ is the smallest set satisfying the following conditions:
\begin{itemize}
    \item All variables in $\mathcal{X}$ and all constant symbols in $\mathcal{F}^0$ are terms, i.e., $\mathcal{X} \cup \mathcal{F}^0 \subseteq T(\mathcal{F},\mathcal{X})$.
		\item If $t_1, \ldots, t_k \in T(\mathcal{F},\mathcal{X})$ and $f \in \mathcal{F}^k$, then  $f(t_1, \ldots, t_k) \in T(\mathcal{F},\mathcal{X})$.
\end{itemize}
This recursive definition ensures that $T(\mathcal{F},\mathcal{X})$ contains all well-formed expressions constructed using symbols from $\mathcal{F}$ and variables from $\mathcal{X}$. If $\mathcal{X} = \emptyset$, meaning that no variables are present, the set of terms is denoted by $T(\mathcal{F})$. A term is said to be \emph{linear} if each variable appears at most once in it. That is, for every variable $x \in \mathcal{X}$, there is no duplication of occurrences within the term. A \emph{context} is a linear term $ C(x_1,\dots,x_k) \in T(\mathcal{F}, \mathcal{X}) $ where the variables $ x_1, \ldots, x_k $ are distinct elements of $\mathcal{X}$. Formally, if a term $C$ contains exactly $k$ distinct variables $x_1, \dots, x_k$, then it is referred to as a \emph{$k$-context}. The set of such contexts is denoted by $\mathcal{C}^k(\mathcal{F})$, and in the special case when $k=1$, the notation simplifies to $\mathcal{C}(\mathcal{F}) := \mathcal{C}^1(\mathcal{F})$. Contexts serve as templates that allow for structured substitution, enabling operations such as term composition and rewriting. If $C \in \mathcal{C}^k(\mathcal{F})$ and $t_1, \dots, t_k$ are terms, then substituting each occurrence of $x_i$ in $C$ with $t_i$ results in a new term, denoted by $C[t_1, \dots, t_k]$. The syntax and semantics of top-down finite tree automata (FTA) are defined formally as follows:

\begin{definition}
A top-down finite tree automata (FTA) is a 4-tuple $\mathcal{A} = (Q, \mathcal{F}, I, \Delta)$, where $Q$ is a finite set of states, $I \subseteq Q$ is a set of initial states, and $\Delta$ is a set of transition rules of the following type:
\begin{equation}
\label{eq1}
(q, f) \rightarrow f(q_1, q_2, \dots, q_k),
\end{equation}
where $k \geq 0$, $f \in \mathcal{F}^k$, and $q, q_1, q_2, \dots, q_k \in Q$. 
\end{definition}
\noindent
$ \Delta({\mathcal{A}},q) $ denotes the set of all transition rules (1) from a state $ q $ of an FTA $\mathcal{A}$. An FTA $\mathcal{A}$ is \emph{deterministic} (DFTA) if there is only one initial state and no two rules have the same left-hand side. By the size of an FTA $\mathcal{A}$ we mean the total number of letters in its transition rules.
  
FTAs operate on terms --- finite trees whose nodes are marked with symbols in $\mathcal{F}$. A transition rule (1) means that whenever an FTA $\mathcal{A}$ at a state $q$ observes a node marked with $f$, the copies of this FTA at the states $q_1, q_2, \dots, q_k$ move to the successors of this node. When $f$ is a constant, a transition rule $(q, f) \rightarrow f $ is called \emph{terminating}: it fires when an FTA reaches a leaf node of an input tree. For two terms of an FTA $\mathcal{A}$: $t$ and $t'$, the \emph{move relation} $t \to_{\mathcal{A}} t'$ is  defined formally as follows:  
$$
t \to_{\mathcal{A}} t' \iff
\begin{cases}
\exists C \in \mathcal{C}(\mathcal{F} \cup Q), \exists t_1, \dots, t_k \in T(\mathcal{F}), \\
\exists q(f) \to (q_1, \dots, q_k) \in \Delta, f \in \mathcal{F}^k,\\
t = C[q(f(t_1, \dots, t_k))], \\
t' = C[f(q_1(t_1), \dots, q_k(t_k))].
\end{cases}
$$
And we defined $\to_{\mathcal{A}}^*$ is the transitive and reflexive closure of $\to_{\mathcal{A}}$. A term $t$ is accepted by $\mathcal{A}$ at a state $q$ if $q(t)\to_{\mathcal{A}}^*t$. A tree language $L_\mathcal{A}(q)$ recognized by $\mathcal{A}$ at a state $q$ is the set of all ground terms accepted by $\mathcal{A}$ at $q$. Two states $q'$ and $q''$ of an FTA $\mathcal{A}$ are called \emph{equivalent} iff $L_{\mathcal{A}(q')} = L_{\mathcal{A}(q'')}$. The equivalence checking problem for FTAs is that of checking, given an FTA $\mathcal {A}$ and a pair of states $q'$ and $q''$, whether these states are equivalent. 

It easy to prove by the induction of the size of terms that the collection of tree languages $L_\mathcal{A}(q)$ accepted at various states $q\in Q$ of $\mathcal{A}$ can be specified as the solution of the following system of equations over the sets of finite terms:
\begin{equation*}
L_{\mathcal{A}}(q) = \bigcup_{\delta \in \Delta(\mathcal{A},q)} \{f(L_\mathcal{A}(q_1), L_\mathcal{A}(q_2), \dots, L_\mathcal{A}(q_k)) \mid \delta: (q, f) \rightarrow f(q_1, q_2, \dots, q_k)\} 
\end{equation*}
A state $q \in Q$ is said to have a \emph{non-empty language} if $L_{\mathcal{A}}(q) \neq \emptyset$, i.e., there exists at least one term accepted by $\mathcal{A}$ starting from state $q$. It was proved that the emptiness problem is decidable for an DFTA, and it is P-complete under logspace reductions~\cite{tata}. The emptiness checking algorithm is presented below.
\begin{algorithm}[H]
\label{alg_empty}
\SetAlgoLined
\KwIn{A top-down DFTA $\mathcal{A} = (Q, \mathcal{F}, I, \Delta)$, a state $q \in Q$}
\KwOut{\textit{True} if $L_{\mathcal{A}}(q) \neq \emptyset$, otherwise \textit{False}}

$\mathit{NonEmpty} \gets \emptyset$\;
$\mathit{DependencyMap}: Q \to \text{set of rules}$ \tcp*[r]{Key: child state, Value: parent rules}
$\mathit{Worklist} \gets$ empty queue\;

\ForEach{rule $\delta: (p, f) \to f()$ where $f \in \mathcal{F}^0$}{
    $\mathit{NonEmpty} \gets \mathit{NonEmpty} \cup \{p\}$\;
    $\mathit{Worklist}.\mathsf{push}(p)$\;
}

\ForEach{rule $\delta: (p, f) \to f(q_1, \ldots, q_k)$ where $k \geq 1$}{
    \ForEach{$q_i \in \{q_1, \ldots, q_k\}$}{
        $\mathit{DependencyMap}[q_i] \gets \mathit{DependencyMap}[q_i] \cup \{\delta\}$\;
    }
}

\While{$\mathit{Worklist}$ is not empty}{
    $q_c \gets \mathit{Worklist}.\mathsf{pop}()$\;
    \ForEach{rule $\delta: (p, f) \to f(q_1, \ldots, q_k)$ in $\mathit{DependencyMap}[q_c]$}{
        \If{$p \notin \mathit{NonEmpty}$ and $\forall q_i \in \{q_1, \ldots, q_k\}, q_i \in \mathit{NonEmpty}$}{
            $\mathit{NonEmpty} \gets \mathit{NonEmpty} \cup \{p\}$\;
            $\mathit{Worklist}.\mathsf{push}(p)$\;
        }
    }
}
\Return $q \in \mathit{NonEmpty}$\;
\caption{Non-Emptiness Checking For Top-Down DFTA}
\end{algorithm}

This algorithm computes the set of states that recognize a non-empty language by iteratively propagating non-empty information. The algorithm is initialised with sets of non-empty states with terminating transition rules that correspond to constants in the input alphabet. $\mathit{DependencyMap}$ is constructed to record each state, and the set of transition rules in which that state appears as a substate. During execution, the algorithm maintains $\mathit{Worklist}$ of newly discovered non-empty states. Whenever a state is removed from the worksheet, the algorithm checks all the transition rules that depend on that state and checks if its parent state becomes non-empty. This process continues until no new non-empty state can be found. The assumption $ L_{\mathcal{A}}(q) \neq \emptyset $ for every state $ q \in Q $ holds iff $ \mathit{NonEmpty} = Q $.

\begin{theorem}
For a top-down DFTA $\mathcal{A}$, algorithm~\ref{alg_empty} always correctly determines whether $L_{\mathcal{A}}(q) \neq \emptyset$. And it runs in time $O(n)$, where $n$ is the total size of $\mathcal{A}$.
\end{theorem}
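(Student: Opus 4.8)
The proof splits into two parts: (i) \emph{partial correctness}, namely that at termination the set $\mathit{NonEmpty}$ equals $\{p \in Q : L_{\mathcal{A}}(p) \neq \emptyset\}$, so that returning ``$q \in \mathit{NonEmpty}$'' is the right answer; and (ii) the $O(n)$ bound on the running time.

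For partial correctness I would argue both inclusions together with termination. \textbf{Soundness} ($\mathit{NonEmpty} \subseteq \{p : L_{\mathcal{A}}(p)\neq\emptyset\}$): induct on the order in which states are placed in $\mathit{NonEmpty}$. A state inserted in the first \textbf{for}-loop carries a rule $(p,f)\to f()$ with $f \in \mathcal{F}^0$, hence $f \in L_{\mathcal{A}}(p)$; a state $p$ inserted inside the \textbf{while}-loop is witnessed by a rule $(p,f)\to f(q_1,\dots,q_k)$ all of whose children are already in $\mathit{NonEmpty}$, so by the induction hypothesis we may pick $t_i \in L_{\mathcal{A}}(q_i)$, and the system of language equations for $L_{\mathcal{A}}$ given above yields $f(t_1,\dots,t_k) \in L_{\mathcal{A}}(p)$. \textbf{Completeness} (the reverse inclusion): for a non-empty state $p$ let $h(p)$ be the least number of nodes of a term in $L_{\mathcal{A}}(p)$, and induct on $h(p)$; a minimal witness has the shape $t = f(t_1,\dots,t_k)$, the equation for $L_{\mathcal{A}}(p)$ supplies a rule $(p,f)\to f(q_1,\dots,q_k)$ with $t_i \in L_{\mathcal{A}}(q_i)$ and hence $h(q_i) < h(p)$; if $k=0$ the first loop inserts $p$, otherwise by the induction hypothesis every $q_i$ is eventually inserted (and thus pushed onto the worklist), and when the last of them to be inserted, say $q_m$, is later popped, the rule is scanned through $\mathit{DependencyMap}[q_m]$ with all $q_i$ already in $\mathit{NonEmpty}$, so $p$ ends up there too. \textbf{Termination} is clear: $\mathit{NonEmpty}$ is monotone and contained in $Q$, and a push happens only at a fresh insertion (in the \textbf{while}-loop) or at one of the finitely many constant rules, so the worklist is drained after boundedly many iterations.

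For the complexity, the initialization scans every rule a constant number of times and, in building $\mathit{DependencyMap}$, touches each child-occurrence once --- altogether $O(n)$. The step I expect to be the main obstacle is bounding the \textbf{while}-loop: a rule $\delta$ with children $q_1,\dots,q_k$ lies in $\mathit{DependencyMap}[q_i]$ for each distinct child, so in the naive reading a single pop could cause a full $O(k)$ rescan of $\delta$ and $\delta$ could be rescanned once per child, which is $O(k^2)$ per rule and $O(n^2)$ overall. To keep the running time linear I would carry, for each rule, a counter of children not yet known to be non-empty, initialized to the rule's arity (the standard counter technique for Horn/AND--OR reachability): popping $q_c$ then costs $O(1)$ per rule of $\mathit{DependencyMap}[q_c]$ --- a single decrement --- and the parent state is inserted and pushed exactly when its counter reaches $0$. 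Representing $\mathit{NonEmpty}$ as a Boolean array over $Q$ and $\mathit{DependencyMap}$ as adjacency lists over rules, each child-occurrence of each rule is processed $O(1)$ times, each state is popped $O(1)$ times, and the whole procedure runs in $O(n)$ on the pointer-augmented RAM. (With the pseudocode kept verbatim, the same $O(n)$ bound still follows whenever the maximal arity in $\mathcal{F}$ is treated as a constant.)
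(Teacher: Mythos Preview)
Your argument is correct and follows the same inductive skeleton as the paper---induction on insertion order for soundness, plus a counting argument for the running time---so there is no disagreement on approach.  Where you diverge is in rigor, and in both places you are the more careful one.  First, the paper only spells out the soundness direction (everything inserted into $\mathit{NonEmpty}$ is genuinely non-empty) and then asserts that this ``precisely characterizes the non-emptiness condition''; you supply the missing completeness induction on the height of a minimal accepted term.  Second, you are right that the pseudocode as written is not linear in the worst case: a rule of arity $k$ sits in up to $k$ entries of $\mathit{DependencyMap}$, and each visit rescans all $k$ children, giving $O(k^2)$ per rule.  The paper's line ``the total number of rule examinations is $O(|\Delta|)$'' sweeps this under the rug.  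Your counter-per-rule fix (the standard Horn-reachability trick) is exactly what is needed to recover the $O(n)$ bound without a bounded-arity assumption, and your parenthetical that the verbatim pseudocode is linear only when the maximal arity is a constant is accurate.
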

\begin{proof}
The correctness follows by induction on the construction of $\mathit{NonEmpty}$. A state $p$ is added to $\mathit{NonEmpty}$ if either:
\begin{itemize}
    \item there exists a terminating rule $(p, f) \to f()$, so $L_{\mathcal{A}}(p)$ clearly contains the constant $f$; or
    \item there exists a rule $(p, f) \to f(q_1, \ldots, q_k)$ and all $q_i$ are already in $\mathit{NonEmpty}$, which implies that $p$ recognizes at least the tree $f(t_1, \ldots, t_k)$ where each $t_i$ is accepted from $q_i$.
\end{itemize}
This precisely characterizes the non-emptiness condition. The algorithm terminates because each state and each rule is processed at most once:
\begin{itemize}
    \item Each rule is inserted into $\mathit{DependencyMap}$ exactly once.
    \item Each state is added to $\mathit{NonEmpty}$ and to $\mathit{Worklist}$ at most once.
    \item Each rule in $\mathit{DependencyMap}$ is checked only when one of its child states is popped from $\mathit{Worklist}$.
\end{itemize}
The total number of rule examinations is $O(|\Delta|)$ and each state is pushed and popped at most once, yielding overall time complexity $O(n)$ where $n = |Q| + |\Delta|$.
\end{proof}

\section{Equivalence checking algorithm}
This section presents an equivalence checking algorithm for top-down deterministic tree automata (DFTAs). Our construction follows a similar line to the algebraic method proposed in~\cite{zakh1, zakh2}, which provides a uniform means for designing equivalence checking procedures for a range of finite-state models (e.g. transducers, two-tape automata, and deterministic pushdown systems) by specifying their behavior in terms of language or transduction equations. Our algorithm is specific to top-down DFTAs but it inherits these algebraic insights and leverages the structural determinism of tree automata to maintain a quadratic worst-case time complexity. 

The algorithm consists of two main stages. In the first stage, the algorithm systematically constructs a system of language-theoretic equations that characterizes the equivalence condition $L_\mathcal{A}(q') = L_\mathcal{A}(q'')$ in terms of the structural behavior of the automaton. In the second stage, it applies an iterative variable elimination procedure to this system, based on variable substitution, conflict detection, and equation restoration. This procedure incrementally simplifies the system while preserving its solvability. The algorithm concludes that the two states $q'$ and $q''$ are equivalent if and only if the elimination of variables comes to the end without detecting a conflict, thereby ensuring that the final system admits a solution. Otherwise, a detected conflict confirms their non-equivalence.

Given a deterministic top-down tree automaton $\mathcal{A}$ and two of its states $q'$ and $q''$, the goal is to determine whether $L_{\mathcal{A}}(q') = L_{\mathcal{A}}(q'')$. Our approach divides into two main stages:

\subsection{Stage 1: Constructing the system of equations.}
For each state $q \in \mathcal{A}$, we associate a language variable variable $X_q$ with $q$ to represent the set of trees accepted by the automaton at the state $q$. Then for each transition $\delta$ of the form $(q, f) \to f\bigl(q_1, q_2, \dots, q_k\bigr)$ we denote a term $f(X_{q_1},\dots,X_{q_k})$ as $t_\delta$ and write an equation representing how $X_q$ depends on the variables $X_{q_1}, X_{q_2}, \dots, X_{q_k}$. Then we add the constraint $X_{q'} = X_{q''}$ to the system as our goal requirement that the two states $q'$ and $q''$ should recognize the same tree language. When combining all these equations into a unified equation system $\mathbf{E}_0$ we fully characterize the equivalence condition $L_{\mathcal{A}}(q') = L_{\mathcal{A}}(q'')$: 
\begin{equation*}
{\mathcal E}_0 = \{X_q = \sum\limits_{\delta \in \Delta(q)} t_\delta : q\in Q\} \cup \{X_{q'}=X_{q''}\}. 
\end{equation*}

\subsection{Stage 2: Checking the solvability of the equation system ${\mathcal E}_0$.}
The algorithm iteratively applies the following series of steps to the systems of equations beginning from ${\mathcal E}_0$ built at the Stage 1. Suppose that after the $i$-th iteration, $i\geq 0$, we have a system of equations ${\mathcal E}_i$. which includes equations of the form $X_q = \sum\limits_{\delta \in \Delta} t_\delta$ and $X_p=X_q$.
\begin{enumerate}
\item \emph{Termination detection:} If there are no equations of the form $X_{q'}=X_{q''}$ in ${\mathcal E}_i$ then the algorithm stops and outputs an answer: $q'$ and $q''$ are equivalent states of ${\mathcal A}$. 
\item \emph{Substitution:} Otherwise, for every equation of the form $X_{q'}=X_{q''}$ in ${\mathcal E}_i$ replace all occurrences of the variable $X_{q'}$ in the equations of ${\mathcal E}_i$ with the variable $X_{q''}$ and remove the equation $X_{q'}=X_{q''}$ from ${\mathcal E}_i$.
\item \emph{Conflict detection:} If there are two equations with the same left-hand side (say, $X_q$) such that one of them contains some functional symbol (say, $f$) in its right-hand side, whereas the other does not, then the algorithm stops and outputs an answer: $q'$ and $q''$ are not equivalent states of ${\mathcal A}$.
\item \emph{Restoration:} Otherwise, for every pair of equations with the same left-hand side (say, $X_q$) of the form
\begin{equation}
\label{eq2}
    X_q = \sum f(X_{q_1},\dots,X_{q_k}) 
\end{equation}
\begin{equation}
\label{eq3}
    X_q = \sum f(X_{p_1},\dots,X_{p_k}) 
\end{equation}
and for every functional symbol that appears in their right-hand side (say, $f$) add equations
\begin{equation*}
      X_{q_1} = X_{p_1}, \ \dots \ , X_{q_k} = X_{p_k} 
\end{equation*}
to the system and afterwards remove from the system one of the equations (\ref{eq2}) or (\ref{eq3}). 
\end{enumerate}
Denote by ${\mathcal E}_{i+1}$ the system of equations obtained thus, and this is where the iteration $i$ of the algorithm ends. 	

The final decision to be taken by the equivalence checking algorithm is as follows: either a conflict is found, which means that the states are \emph{nonequivalent}, or the system is successfully resolved, which means that the states are \emph{equivalent}. We provide a formal description of this algorithm:

\begin{algorithm}[H]
\label{alg}
\SetAlgoLined
\KwIn{A top-down DFTA $\mathcal{A} = (Q, \mathcal{F}, I, \Delta)$, states $q', q'' \in Q$ \\}
\KwOut{\textit{True} iff $L_{\mathcal{A}}(q') = L_{\mathcal{A}}(q'')$, otherwise \textit{False}}
$\mathcal{E} \gets \{X_q = \sum f(X_{q_1}, \dots, X_{q_k}) \mid (q, f) \rightarrow f(q_1, \dots, q_k) \in \Delta \}$\;
$\mathcal{E} \gets \mathcal{E} \cup \{X_{q'} = X_{q''} \}$\;
\Repeat{$\mathcal{E}' = \mathcal{E}$}{
    $\mathcal{E}' \gets \mathcal{E}$\;
    \If{no equation of the form $X_{q'}=X_{q''}$ in $\mathcal{E}'$}{
        \Return \textbf{True}
    }
    \ForAll{$X_{q'}=X_{q''} \in \mathcal{E}'$}{
        Substitute $X_{q'}$ with $X_{q''}$ in all equations of $\mathcal{E}'$\;
        Remove $X_{q'} = X_{q''}$ from $\mathcal{E}'$\;
        \If{conflict detected in $\mathcal{E}'$}{
            \Return \textbf{False}
        }
    }
    
    \ForAll{pairs of equations with the same left-hand side}{
        \If{functional symbols match}{
            Add new equalities $X_{q_1} = X_{p_1}, X_{q_2} = X_{p_2}, \dots, X_{q_k} = X_{p_k}$ \;
            Remove redundant equations\;
        }
    }
}
\caption{Equivalence Checking for Top-Down DFTA}
\end{algorithm}

\subsection{Proofs of algorithm~\ref{alg} and complexity analysis}
\begin{theorem}
\label{equiv_algorithm_correctness}
For every top-down DFTA $\mathcal{A}$ and a pair of its states $q'$ and $q''$, the equivalence checking algorithm~\ref{alg} described above always terminates.
\end{theorem}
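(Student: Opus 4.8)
The plan is to attach to each system of equations produced by the algorithm a well-founded measure and to show that it strictly decreases at every iteration of the main loop. First I would note that a single iteration is a finite computation: $\mathcal{E}_i$ contains finitely many equations, the substitution step handles each equality $X_{q'}=X_{q''}$ exactly once, and the restoration step ranges over the finitely many pairs of equations sharing a left-hand side. Hence termination of Algorithm~\ref{alg} reduces to showing that the sequence of systems $\mathcal{E}_0, \mathcal{E}_1, \mathcal{E}_2, \dots$ that it generates is finite.

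The structural fact I would build on is that \emph{no fresh variable is ever introduced}: substitution merely rewrites occurrences of $X_{q'}$ into the already-present variable $X_{q''}$, and restoration only appends equalities $X_{q_i}=X_{p_i}$ among variables that already occur in the current definitional equations. Therefore the set of variables occurring in $\mathcal{E}_i$ stays within $\{X_q : q\in Q\}$ and, since substitution can delete variables but never add them, it is non-increasing; write $V_i$ for its cardinality, so $n \ge V_0 \ge V_1 \ge \cdots \ge 1$. Alongside $V_i$ I would track $D_i$, the number of definitional equations of the form~(\ref{eq2}) in $\mathcal{E}_i$, and $G_i$, the number of plain equalities $X_p=X_q$ in $\mathcal{E}_i$, and take as termination measure the triple $\mu(\mathcal{E}_i)=(V_i, D_i, G_i)$ ordered lexicographically.

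The heart of the argument is a short case analysis establishing $\mu(\mathcal{E}_{i+1}) < \mu(\mathcal{E}_i)$ for any iteration that does not already halt. If the substitution step merges two genuinely distinct variables, then that variable disappears and $V_{i+1} < V_i$, so we are done no matter what restoration subsequently does (it cannot raise $V$), and conflict detection only forces an immediate halt. Otherwise every equality processed is of the form $X_q=X_q$; substitution then leaves the definitional equations intact and merely deletes those (at least one) trivial equalities, so $V$ is unchanged and $G$ has strictly dropped. If restoration now fires — which requires some left-hand side to occur in two equations — it deletes at least one definitional equation, so $D_{i+1}<D_i$ with $V$ unchanged; if it does not fire, no equation is added and the net effect of the iteration is the removal of the trivial equalities, giving $V_{i+1}=V_i$, $D_{i+1}=D_i$, $G_{i+1}<G_i$. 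In every surviving case $\mu$ decreases lexicographically.

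Since the lexicographic order on $\mathbb{N}^3$ is well-founded, no infinite strictly decreasing chain $\mu(\mathcal{E}_0) > \mu(\mathcal{E}_1) > \cdots$ can exist, so the sequence of systems is finite and the algorithm halts. The genuinely delicate point — and the step I expect to be the main obstacle — is the restoration rule, since it is the only rule that enlarges the system by introducing new equalities, so any measure based merely on counting equations fails; the resolution is the observation that restoration can enlarge the pool of equalities only in an iteration that has already "paid" for it through a strict decrease in the number of variables or in the number of definitional equations, and encoding exactly this priority as a lexicographic measure is what makes the proof close.
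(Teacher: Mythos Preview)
Your proof is correct and follows essentially the same strategy as the paper: attach a lexicographic measure to each system $\mathcal{E}_i$ (the paper uses $(n_1,n_2,n_3)$ with $n_1$ the number of variables, $n_2$ the number of non-variable equations, $n_3$ the total number of equations, which is equivalent to your $(V_i,D_i,G_i)$) and argue it strictly decreases at every non-halting iteration. Your case analysis is in fact more careful than the paper's, since you explicitly handle the degenerate situation where all pending equalities are trivial $X_q=X_q$, whereas the paper simply asserts that substitution ``strictly decreases $n_1$'' without addressing that edge case.
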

\begin{proof}
Let $\mathcal{E}_0$ be the initial system of equations constructed from $\mathcal{A}$. By definition, $\mathcal{E}_0$ contains exactly one variable $X_q$ per state $q \in Q$, plus the extra equation $X_{q'} = X_{q''}$ representing the target condition $L_{\mathcal{A}}(q') = L_{\mathcal{A}}(q'')$. Recall that in each iteration of the algorithm, we apply a sequence of steps (Substitution, Conflict Detection, Restoration, and the checks for Termination or Conflict). Define the following characteristics for the current system $\mathcal{E}_i$ at iteration $i$:
$$
(n_1, n_2, n_3),
$$
where 
\begin{itemize}
\item $n_1$ is the number of distinct variables $X_q$ in $\mathcal{E}_i$,
\item $n_2$ is the number of equations in $\mathcal{E}_i$ whose right-hand side contains at least one functional symbol from $\mathcal{F}$ (i.e.\ a ``non-variable'' right-hand side),
\item $n_3$ is the total number of equations in $\mathcal{E}_i$.
\end{itemize}
We order these triples lexicographically, i.e.\ $(n_1, n_2, n_3) < (m_1, m_2, m_3)$ if and only if 
$$
n_1 < m_1 \quad\text{or}\quad \bigl(n_1 = m_1 \text{ and } n_2 < m_2\bigr)\quad\text{or}\quad \bigl(n_1 = m_1, n_2 = m_2 \text{ and } n_3 < m_3\bigr).
$$
In each iteration we perform the following steps in sequence:
\begin{itemize}
\item \emph{Substitution step}: It strictly decreases $n_1$ (we merge two variables into one).
\item \emph{Conflict detection step}: If a conflict is found, the algorithm halts immediately.
\item \emph{Restoration step}: When merging equations with the same left variable, we either unify them (adding new constraints for child variables) or remove duplicates.
\end{itemize}
These operations can decrease $n_2$ or $n_3$ or both. Hence, in each iteration at least one of $n_1$, $n_2$, or $n_3$ is strictly reduced. Because these values cannot go below zero, the process can only proceed for finitely many iterations before it halts (either by detection of conflict or reaching the termination condition). Thus, the algorithm always terminates.
\end{proof}

\begin{theorem}
\label{equiv_algorithm_correctness}
For every top-down DFTA $\mathcal{A}$ and a pair of its states $q'$ and $q''$, the equivalence checking algorithm~\ref{alg} correctly recognizes 
whether $L_{\mathcal{A}}(q') = L_{\mathcal{A}}(q'')$..
\end{theorem}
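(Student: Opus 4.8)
The plan is to establish correctness in two directions, showing that the algorithm answers \textbf{True} if and only if $L_{\mathcal{A}}(q') = L_{\mathcal{A}}(q'')$. The conceptual anchor is a soundness invariant: every system $\mathcal{E}_i$ produced during execution is \emph{equivalence-preserving}, meaning that the collection of languages $\{L_{\mathcal{A}}(q)\}_{q \in Q}$ (interpreted through the appropriate identification of variables that have been merged by substitution) is a solution of $\mathcal{E}_i$ if and only if it is a solution of $\mathcal{E}_0$, and moreover $\mathcal{E}_0$ is solvable by this canonical assignment precisely when $L_{\mathcal{A}}(q') = L_{\mathcal{A}}(q'')$. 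First I would record the base case: by the fixed-point characterization of $L_{\mathcal{A}}(q)$ stated in the Preliminaries, the assignment $X_q \mapsto L_{\mathcal{A}}(q)$ satisfies every basic equation $X_q = \sum_{\delta} t_\delta$, so $\mathcal{E}_0 \setminus \{X_{q'} = X_{q''}\}$ is always satisfied, and the full system $\mathcal{E}_0$ is satisfied by this assignment exactly when $L_{\mathcal{A}}(q') = L_{\mathcal{A}}(q'')$.

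Next I would verify that each of the four transformation steps preserves solvability in the strong sense above. The \emph{Substitution} step is justified because in the presence of the equation $X_{q'} = X_{q''}$, replacing $X_{q'}$ by $X_{q''}$ everywhere yields a logically equivalent system; I would track the induced partition of states so that the canonical candidate solution on the reduced variable set is well-defined — this is where determinism of $\mathcal{A}$ enters, since I must argue that merged states that are genuinely equivalent keep the canonical assignment consistent. The \emph{Restoration} step is the key algebraic move and mirrors the ``reduction'' principle from \cite{zakh1, zakh2}: given two equations with the same left-hand side built over a common top symbol $f$, using that trees in $T(\mathcal{F})$ are freely generated, the identity $f(A_1,\dots,A_k) = f(B_1,\dots,B_k)$ on sets of ground terms holds iff $A_i = B_i$ for all $i$ (here I would note that since right-hand sides of basic equations for a \emph{deterministic} automaton contain at most one summand per function symbol, the sums cause no trouble — each $f$ contributes a single term, so the cancellation is clean). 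Hence adding $X_{q_i} = X_{p_i}$ and deleting one copy yields an equivalent system. Finally, the \emph{Conflict detection} step is sound because if one equation with left-hand side $X_q$ contains $f$ and another does not, then no assignment can make both right-hand sides equal (one language contains a tree rooted at $f$, the other cannot), so $\mathcal{E}_i$ — and therefore $\mathcal{E}_0$ — is unsolvable, i.e. $L_{\mathcal{A}}(q') \neq L_{\mathcal{A}}(q'')$.

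With the invariant in hand, both directions follow. If the algorithm halts via Termination detection, the surviving system has no equation $X_{q'}=X_{q''}$; since all basic equations are satisfied by the canonical assignment (by the base case, propagated through substitutions), $\mathcal{E}_i$ is solvable, hence $\mathcal{E}_0$ is solvable, hence $L_{\mathcal{A}}(q') = L_{\mathcal{A}}(q'')$. Conversely, if the algorithm halts via Conflict detection, unsolvability of $\mathcal{E}_i$ propagates back to $\mathcal{E}_0$, giving $L_{\mathcal{A}}(q') \neq L_{\mathcal{A}}(q'')$. Termination is already guaranteed by Theorem~\ref{equiv_algorithm_correctness} (the lexicographic-measure argument), so the algorithm always reaches one of these two exits, and its answer is correct in both cases.

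I expect the main obstacle to be the bookkeeping around the Substitution step: because merging variables changes the index set, the statement ``the canonical assignment solves $\mathcal{E}_i$'' must be phrased relative to the current partition of $Q$ induced by all equalities processed so far, and one must check this partition only ever identifies states that are in fact language-equivalent (otherwise the claimed solution would not exist and the backward direction of the invariant could fail). Making this precise — ideally by an induction showing that whenever $X_p$ and $X_q$ get merged either $L_{\mathcal{A}}(p) = L_{\mathcal{A}}(q)$ or a conflict is subsequently forced — is the delicate part; everything else is a routine appeal to the free generation of the term algebra and the fixed-point semantics of $L_{\mathcal{A}}$.
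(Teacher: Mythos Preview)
Your proposal is correct and follows essentially the same four-part structure as the paper: (i) $\mathcal{E}_0$ is solvable iff $L_{\mathcal{A}}(q')=L_{\mathcal{A}}(q'')$, (ii) each transformation preserves solvability, (iii) termination detection implies solvability, (iv) conflict detection implies unsolvability. Your treatment of the Restoration step via free generation of $T(\mathcal{F})$ and the one-summand-per-symbol property of deterministic automata is in fact more explicit than the paper's.

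One remark on framing: the ``main obstacle'' you flag in your last paragraph is largely an artifact of anchoring the invariant to the \emph{canonical} assignment $X_q\mapsto L_{\mathcal{A}}(q)$ on merged variable classes. The paper sidesteps this by phrasing the invariant purely as ``$\mathcal{E}_i$ is solvable (by some assignment) iff $\mathcal{E}_0$ is'', which is immediate for Substitution and Restoration and does not require tracking whether merged states are genuinely language-equivalent along the way. At termination the surviving system has one basic equation per remaining variable and is solvable by its least fixed point, so solvability propagates back to $\mathcal{E}_0$; the canonical interpretation is only needed once, at the very start, to tie solvability of $\mathcal{E}_0$ to $L_{\mathcal{A}}(q')=L_{\mathcal{A}}(q'')$ (which in turn uses the uniqueness of the solution to the basic equations, a point both you and the paper leave implicit). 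Recasting your invariant this way eliminates the bookkeeping you anticipate without changing anything else in your argument.
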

\begin{proof}
To prove the correctness of the algorithm, we must prove that:
\begin{itemize}
\item[(1)] If the system $\mathcal{E}_0$ is solvable (i.e.\ has a solution over tree languages), then $q'$ and $q''$ are equivalent states;
\item[(2)] The transformations of the system in each iteration preserve the solvability of the system;
\item[(3)] If the algorithm halts by \emph{termination detection} (i.e.\ no more equations of the form $X_p = X_q$), then the resulting system has a solution (so $q',q''$ are equivalent);
\item[(4)] If the algorithm halts by \emph{conflict detection}, then the resulting system has no solution ($q',q''$ are not equivalent).
\end{itemize}

\textbf{(1) \& (2): System $\mathcal{E}_0$ and preservation of solvability.}
By construction, $\mathcal{E}_0$ consists of the equations
$$
\{X_q = \sum_{\delta \in \Delta(q)} t_\delta : q\in Q\} \cup \{X_{q'}=X_{q''}\}. 
$$
where each $t_{\delta}$ is of the form $f(X_{q_1}, \dots, X_{q_k})$ for a transition rule $\delta: (q,f)\to f(q_1,\dots,q_k)$. Observe that there is a natural solution assigning 
$$
X_q \;:=\; L_{\mathcal{A}}(q)\quad\text{for each } q \in Q,
$$
since each $X_q$ must represent the set of all trees accepted in state $q$. Hence $\mathcal{E}_0$ is solvable if and only if $L_{\mathcal{A}}(q') = L_{\mathcal{A}}(q'')$. Furthermore, each local transformation (Substitution, Restoration, removing redundant equations, etc.) does not change whether there exists a solution in which $X_q$'s are interpreted as subsets of trees. In particular:
\begin{itemize}
\item \emph{Substitution} $X_p = X_q$ means identifying the languages represented by $X_p$ and $X_q$. If there was a solution before, we can modify it by letting $X_p$ and $X_q$ interpret the same language; hence solvability is preserved.
\item \emph{Conflict detection} precisely captures the situation where there is no way to assign languages to the same variable to satisfy two incompatible equations.  This indicates no solution exists.
\item \emph{Restoration} merges right-hand sides that share the same functional symbol and adds child-variable equations accordingly, akin to unification of tree structures. If a solution existed before, extending it (or adjusting it) to include new constraints still preserves solvability (unless a conflict is found, in which case solvability fails).
\end{itemize}
Thus, each step preserves the property of ``having a solution'' unless a conflict is explicitly reported.

\textbf{(3) Termination detection implies a valid solution.}
When the algorithm stops because \emph{no equation of the form $X_p = X_q$} remains, the system effectively has each variable $X_q$ associated to a disjunction of terms built from functional symbols (or possibly a pure variable if no merges remain). In such a scenario, the system is consistent with assigning $X_q \subseteq \mathcal{F}^*$ in a way that satisfies all equations. (Concretely, one can read each equation $X_q = \sum_i f(X_{q_1}, \dots)$ as a language-theoretic definition with non-overlapping function symbols.) Since no further merges are pending, the system has no contradictions and can be expanded into a model. Hence there exists a solution, so $q'$ and $q''$ must indeed be equivalent.

\textbf{(4) Conflict detection implies no solution.}
If the algorithm halts in the \emph{conflict detection} step, it has encountered a variable $X_q$ that must simultaneously satisfy two mutually exclusive forms of equations (e.g.\ it is forced to match different function symbols in the same position, or forced to match a function symbol vs.\ an empty (terminating) pattern in the same branch). Such a contradiction makes it impossible to interpret $X_q$ as a single tree-language satisfying both constraints, so the system has no solution. Therefore, $L_{\mathcal{A}}(q') \neq L_{\mathcal{A}}(q'')$.
\end{proof}

We analyze the time complexity of the equivalence checking algorithm~\ref{alg} under the RAM (Random Access Machine) with pointer model. It extends the classical RAM model by incorporating pointer-based memory access. Specifically, it allows the following operations to be performed in constant time ($O(1)$):
\begin{itemize}
    \item Accessing and modifying any element in a data structure via a direct pointer.
    \item Navigating linked data structures (e.g., lists, trees, graphs) by following pointers.
    \item Merging or replacing elements by updating pointers without traversing the entire structure.
\end{itemize}

This computational model reflects the efficiency of practical implementations in modern programming environments, where pointer operations and direct access to memory locations can be executed in constant time, assuming ideal memory allocation and management~\cite{RAM}. Under this model, the cost of key operations in our algorithm—such as variable substitution, equation simplification, conflict detection, and restoration—can be bounded by the size of the current system of equations. 

\begin{theorem}
The equivalence checking algorithm~\ref{alg} always terminates and decides whether $L_\mathcal{A}(q') = L_\mathcal{A}(q'')$ in time $O(n^2)$ where n is the size of $\mathcal{A}$.
\end{theorem}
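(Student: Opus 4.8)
The plan is to combine the termination argument (Theorem~\ref{equiv_algorithm_correctness}, the lexicographic descent on the triple $(n_1, n_2, n_3)$) with a careful accounting of the cost per iteration and a bound on the total number of iterations. First I would establish a size bound on every system $\mathcal{E}_i$ that arises: since the substitution step only merges variables and the restoration step only removes equations or adds equations of the pure form $X_{q_i} = X_{p_i}$ whose variables already occur in the system, the number of distinct variables never exceeds $|Q|$, and — crucially — every equation added is either one of the original at most $|\Delta|$ basic equations (up to variable renaming) or an equality between two of the at most $|Q|$ variables, of which there are at most $\binom{|Q|}{2}$ distinct ones. Hence $|\mathcal{E}_i| = O(|Q|^2 + |\Delta|) = O(n^2)$ in the worst case, but I would argue more sharply that the total textual size of $\mathcal{E}_i$ is $O(n)$: the basic equations never grow (variables only get identified, not duplicated), so their combined size stays $O(n)$, and the equality equations contribute $O(|Q|^2)$ which is itself $O(n^2)$; I will have to be honest about which of these bounds actually controls the complexity.

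Next I would count iterations. The outer \textbf{Repeat} loop performs at least one of: a substitution (strictly decreasing $n_1$, so at most $|Q|$ such events across the whole run), or a restoration that strictly decreases $n_2$ or $n_3$. Since $n_1$ drops at most $|Q|$ times and between consecutive drops of $n_1$ the quantity $n_2 + n_3$ is bounded by $O(n^2)$ and strictly decreases, the total number of iterations is $O(|Q| \cdot n^2)$ in the crudest count — but this is too weak. The sharper observation is that each distinct variable equality $X_p = X_q$ is processed (substituted away) at most once, and each restoration event that adds child-equalities is triggered by a pair of equations sharing a left-hand side and a top function symbol, of which there are at most $O(|\Delta|)$ over the entire run because each original transition rule participates in at most a bounded number of such merges. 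Thus the total number of substitution-and-restoration events is $O(n)$, and I would organize the work so that each event costs $O(n)$: a substitution $X_p \mapsto X_q$ under the RAM-with-pointers model is done by redirecting the pointer for $X_p$'s occurrence list to $X_q$ in $O(1)$ amortized per occurrence, and scanning for conflicts or mergeable pairs costs $O(|\mathcal{E}_i|) = O(n)$. Multiplying $O(n)$ events by $O(n)$ cost per event gives the claimed $O(n^2)$.

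Then I would assemble these pieces: (i) termination is immediate from Theorem~\ref{equiv_algorithm_correctness}; (ii) correctness is Theorem~\ref{equiv_algorithm_correctness} (the second one); (iii) the running time is the product of the iteration bound and the per-iteration cost, both established above, yielding $O(n^2)$. I would also note that the $O(1)$ pointer operations are exactly what makes the substitution step cheap — under a plain RAM one would pay an extra logarithmic or linear factor for the variable-renaming bookkeeping, so the pointer model is essential to the stated bound.

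The main obstacle I anticipate is pinning down the iteration count rigorously rather than via the loose lexicographic argument. The lexicographic measure proves finiteness but a naive read gives a bound like $O(n^3)$ or worse; getting down to $O(n)$ total substitution/restoration events requires the structural insight that (a) each of the at most $|Q|$ variables is eliminated by substitution exactly once, and (b) the restoration step, applied to a pair of basic equations with a common left-hand side, can be charged to the participating transition rules, each of which is "consumed" (one of the two equations is deleted) so the number of restoration events is $O(|\Delta|)$. Making (b) precise — in particular ruling out the possibility that restoration repeatedly regenerates mergeable pairs — is the delicate point, and I would handle it by showing that after a restoration step removes one of equations~(\ref{eq2}),~(\ref{eq3}), the number of equations with that left-hand side strictly decreases and never increases again except possibly via a later substitution that brings in a fresh variable, an event already budgeted under (a).
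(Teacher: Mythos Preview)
Your proposal is correct and follows essentially the same skeleton as the paper's proof: bound the number of iterations by $O(n)$, bound the work per iteration by $O(n)$ under the RAM-with-pointers model, and multiply. The paper's own argument is in fact considerably terser than yours---it simply asserts that the size of each $\mathcal{E}_i$ stays bounded by the initial size $n$ and that the number of iterations is $O(n)$ ``due to the system size parameter,'' without the charging arguments you sketch in (a) and (b); your version, which explicitly charges substitutions to eliminated variables and restorations to consumed basic equations, supplies exactly the justification the paper leaves implicit, and your candid flagging of the delicate point (ruling out regeneration of mergeable pairs) is well placed.
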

\begin{proof}
In each iteration, the algorithm either detects a conflict (and stops), or merges two variables (and simplifies the system), or adds constraints and removes duplicate equations. Under the RAM with pointers model, substitution and conflict detection require time proportional to the size of the current system $\mathcal{E}_i$, which is always bounded by the initial size $n$. So each iteration takes $O(n)$ time. Since the number of iterations is at most $O(n)$ (due to the system size parameter) and each iteration takes $O(n)$ time in the pointer model, the total running time is $O(n) \times O(n) = O(n^2)$. Thus, the algorithm can determine the equivalence of $q'$ and $q''$ in $O(n^2)$ time.
\end{proof}

\section{Conclusion}
In this paper we proposed an efficient equivalence checking algorithm for deterministic top-down finite tree automata (DFTAs). The algorithm reduces the equivalence checking problem to the solvability problem for a system of language-theoretic equations. Through a sequence of system transformations, the algorithm iteratively simplifies the system while preserving its solvability. Our algebraic techniques of equivalence checking problem for tree automata contribute to a more shrewd study of equivalence checking problems for finite state models, with potential extensions to top-down tree transducers. Future work may explore optimisation methods that further reduce the actual running complexity and investigate applications in other kinds of models.

\bibliographystyle{unsrt}  
\bibliography{references}

\end{document}